\documentclass[preprint,12pt]{elsarticle}

\usepackage[utf8]{inputenc}
\usepackage[T1]{fontenc}
\usepackage{amsmath, amssymb, amsthm}
\usepackage{geometry}
\usepackage{hyperref}
\usepackage{float}     
\usepackage{algorithm}
\usepackage{algpseudocode}
\usepackage{microtype} 
\usepackage{booktabs}  
\usepackage{enumitem}  

\newtheorem{theorem}{Theorem}
\newtheorem{lemma}{Lemma}

\journal{Theoretical Computer Science}

\begin{document}

\begin{frontmatter}

\title{An $\mathcal{O}(\log N)$ Time Algorithm for the Generalized Egg Dropping Problem}

\author{Kleitos Papadopoulos\corref{cor1}}
\ead{kleitospa@gmail.com}
\ead[url]{https://orcid.org/0000-0002-7086-0335}

\cortext[cor1]{Corresponding author}

\begin{abstract}
The generalized egg dropping problem is a classic challenge in dynamic programming and sequential state-space reduction. Standard dynamic programming evaluates the minimax minimum number of tests in $\mathcal{O}(K \cdot N^2)$ time. A known approach formulates the testable thresholds as a partial sum of binomial coefficients and applies binary search to reduce the time complexity to $\mathcal{O}(K \log N)$. In this paper, we demonstrate that binary search over the complete sequential test domain is suboptimal. By restricting a binary search over multiples of $K$, we isolate a dynamic structural envelope that guarantees convergence. We prove that this boundary balances the search depth against the combinatorial evaluation cost, cancelling the $K$ variable to strictly bound the search phase to $\mathcal{O}(\log N)$. Combined with an incremental traversal, our algorithm eliminates the standard combinatorial bottlenecks. Furthermore, we formulate an explicit $\mathcal{O}(1)$ space policy to dynamically reconstruct the optimal decision tree. Finally, we establish a bounding lemma proving the optimal threshold is tightly confined within a strict radius, laying the theoretical groundwork for alternative $\mathcal{O}(K)$ algorithms driven by direct numerical approximation.
\end{abstract}

\begin{keyword}
Sequential testing \sep Minimax policy \sep Combinatorial optimization \sep Dynamic Programming \sep Asymptotic complexity \sep State-space reduction
\MSC[2020] 68W40 \sep 05A10 \sep 90C39
\end{keyword}

\end{frontmatter}

\section{Introduction}
\label{sec:intro}

The generalized egg dropping problem is a fundamental worst-case optimization problem utilized to study the efficiency of dynamic programming, minimax control policies, and sequential search bounds. Beyond its pedagogical value, this framework models fundamental challenges in state-space reduction, adversarial search bounds, and worst-case complexity analysis. While it historically models operations research scenarios including destructive material stress-testing and fault localization, its primary theoretical value lies in optimizing combinatorial mappings where probes are costly and failure events deplete a limited inventory. 

The problem is defined as follows: \textit{Given $K$ identical test items that break upon falling from a critical threshold floor $F$, and a linear domain of $N$ thresholds, what is the minimum number of tests $T^*$ required to identify $F$ under adversarial outcomes, and what is the optimal testing policy?}

Traditional solutions rely on Bellman's Principle of Optimality to construct a state-transition matrix \cite{bellman1957dynamic}. The standard dynamic programming approach requires $\mathcal{O}(K \cdot N^2)$ operations, optimizable to $\mathcal{O}(K \cdot N)$ via monotonicity tracking. For very large state spaces (e.g., $N \ge 10^{18}$), this matrix approach rapidly exceeds practical spatial and temporal limits.

A known combinatorial improvement relies on modeling the maximum threshold height $E(T, K)$ coverable by $T$ tests and $K$ items as a partial sum of the $T$-th row of Pascal's triangle \cite{boardman2004egg, konhauser1996which}. By combining this parameterization with a discrete binary search, the previous state-of-the-art yields an $\mathcal{O}(K \log N)$ time algorithm \cite{wills2009connections}. 

We introduce a bounded algebraic approach that circumvents the sequential combinatorial bottlenecks, isolating the optimal threshold in bounded $\mathcal{O}(\log N)$ time. In addition to the multiplier-search algorithm, we prove a second structural result showing that the optimum is confined to a radius of $K$ around $(N \cdot K!)^{1/K}$. This confinement can be implemented as an analytic algorithm: compute the integer floor of this $K$-th root, initialize near the lower edge of the resulting window, and traverse only the local interval using the same incremental identities. To summarize, our primary contributions are:
\begin{itemize}[noitemsep, topsep=4pt]
    \item A scalar bounding technique with a self-contained proof establishing a self-balancing $\mathcal{O}(\log N)$ search envelope.
    \item A structural state-update sequence leveraging reverse parity to compute the exact minimum threshold iteratively.
    \item A confinement lemma showing that a direct analytic implementation can recover $T^*$ by checking only a radius-$K$ interval around $(N \cdot K!)^{1/K}$, once the integer root is available.
    \item An $\mathcal{O}(1)$ space algebraic formulation to dynamically reconstruct the optimal testing policy, summarized alongside algorithm complexities in Table \ref{tab:complexity}.
\end{itemize}

\begin{table}[ht]
\centering
\caption{Worst-case complexities for the generalized egg dropping problem}
\label{tab:complexity}
\footnotesize
\setlength{\tabcolsep}{4pt}
\renewcommand{\arraystretch}{1.15}
\begin{tabular}{@{}lcc@{}}
\toprule
\textbf{Algorithm} & \textbf{Time} & \textbf{Space} \\ 
\midrule
Standard DP \cite{skiena2020algorithm} 
    & $\mathcal{O}(K N^2)$ 
    & $\mathcal{O}(K N)$ \\
Optimized DP 
    & $\mathcal{O}(K N)$ 
    & $\mathcal{O}(N)$ \\
Binomial binary search \cite{wills2009connections} 
    & $\mathcal{O}(K \log N)$ 
    & $\mathcal{O}(1)$ \\
\textbf{Proposed multiplier search} 
    & \textbf{$\mathcal{O}(\log N)$} 
    & \textbf{$\mathcal{O}(1)$} \\
\textbf{Proposed analytic approach}$^\dagger$ 
    & \textbf{$\mathcal{O}(K + \mathcal{R}(N,K))$} 
    & \textbf{$\mathcal{O}(1)$} \\ 
\midrule
\multicolumn{3}{@{}p{0.82\linewidth}@{}}{\scriptsize
$^\dagger$ Here, $\mathcal{R}(N,K)$ denotes the time required to compute the exact integer $K$-th root of $N \cdot K!$.} \\
\bottomrule
\end{tabular}
\end{table}
\section{Related Work}
\label{sec:related}

The mathematical formulation of the problem traces back to mathematical puzzle collections \cite{konhauser1996which} and was formally categorized by Boardman \cite{boardman2004egg}, who defined the ``Egg-Drop Numbers.'' In computer science, the problem is a ubiquitous exercise in Dynamic Programming, featured in standard texts like Skiena’s Algorithm Design Manual \cite{skiena2020algorithm}.

Combinatorial optimization typically approaches the inverse problem (finding the minimum number of tests $T^*$ given $N$ floors) through search techniques. Wills \cite{wills2009connections} outlined the parameterization combined with binary search that leads to an $\mathcal{O}(K \log N)$ time algorithm. Because the initial search space is bounded by $N$, a standard binary search maintains a depth of $\mathcal{O}(\log N)$, while computing the capacity at each node requires an $\mathcal{O}(K)$ loop. Under the worst-case limit where $K \approx \log_2 N$, this standard algorithmic complexity degrades to $\mathcal{O}(\log^2 N)$. 

In this paper, we first present an algorithm that circumvents standard evaluation bottlenecks by restricting a discrete binary search to scalar multiples of $K$. We demonstrate that this parameterization balances computational depth against the evaluation loop, yielding a strictly bounded $\mathcal{O}(\log N)$ time complexity. We subsequently establish a bounding lemma proving that the optimal solution is restricted to a very tight window. This theoretical confinement lays the groundwork for alternative numerical approaches.

\section{Mathematical Formulation}
\label{sec:math}

Let $E(T, K)$ denote the maximum number of floors testable under a minimax policy. This capacity is the partial sum of binomial coefficients \cite{boardman2004egg}:
\begin{equation}
    E(T, K) = \sum_{i=1}^{K} \binom{T}{i}.
\end{equation}

Our primary objective is to compute the inverse mapping for the optimal value:
\begin{equation}
    T^* = \min \left\{ T \in \mathbb{N} \;\middle|\; E(T, K) \ge N \right\}.
\end{equation}

By definition, $\binom{T}{i} = 0$ for all $i > T$. Thus, when items exceed tests ($K \ge T$), the summation captures the entire binomial row. By the standard Binomial Theorem, this simplifies to $E(T, K) = 2^T - 1$.

\section{The Algorithm}
\label{sec:algorithm}

We partition the calculation of the optimal test budget $T^*$ into three phases: a decision-tree lower bound check, a discrete root bounding with integrated state caching, and an incremental linear search. The procedure is formalized in Algorithm \ref{alg:egg}.

\subsection{Phase 1: Decision-Tree Lower Bound}
Every sequential binary test partitions the remaining search space into at most two outcomes. By the information-theoretic properties of binary decision trees, isolating one state among $N+1$ possible outcomes requires a minimum of $T_{\text{ideal}} = \lceil \log_2(N+1) \rceil$ tests. By definition, $2^{T_{\text{ideal}}} \ge N+1 > N$. If $K \ge T_{\text{ideal}}$, the item capacity constraint is non-binding, reducing the problem to a classical binary search. If the algorithm proceeds past this check, we establish a theoretical limit for the remainder of the execution: $K < \log_2(N+1)$.

\subsection{Phase 2: Discrete Root Bounding and Caching}
In the constrained regime where $K < T_{\text{ideal}}$, we require a sequence that guarantees structural capacity coverage. Because the capacity function evaluates a sum of strictly positive integers, it is bounded below by its largest single term ($E(T, K) \ge \binom{T}{K}$). The binomial term admits the following product expansion:
\begin{equation}
    \binom{T}{K} = \prod_{j=0}^{K-1} \frac{T-j}{K-j}.
\end{equation}
For every $0 \le j \le K-1$ and every valid $T \ge K$, we have
\begin{equation}
    \frac{T-j}{K-j} \ge \frac{T}{K},
\end{equation}
because $K(T-j) \ge T(K-j)$ is equivalent to $j(T-K) \ge 0$. Therefore,
\begin{equation}
    \binom{T}{K}
    = \prod_{j=0}^{K-1} \frac{T-j}{K-j}
    \ge \prod_{j=0}^{K-1} \frac{T}{K}
    = \left(\frac{T}{K}\right)^K.
\end{equation}

By parameterizing the sequence as scalar multiples of the item count ($T = K \cdot M$), we establish the algebraic lower bound:
\begin{equation} \label{eq:multiplier_bound}
    E(K \cdot M, K) \ge \left(\frac{K \cdot M}{K}\right)^K = M^K.
\end{equation}
To guarantee coverage over $N$, the capacity must reach $2^{T_{\text{ideal}}} > N$. Setting $M^K \ge 2^{T_{\text{ideal}}}$ and resolving for integer bounds dictates a maximum multiplier of $M_{\text{max}} = 2^{\lceil T_{\text{ideal}} / K \rceil}$. Therefore, limiting the discrete search domain strictly to $M \in [1, M_{\text{max}}]$ guarantees full coverage. 

We isolate the optimal multiplier $M^*$ utilizing binary search over this bounded interval. Within each step, computing the capacity $E(K \cdot M_{\text{mid}}, K)$ evaluates in an $\mathcal{O}(K)$ arithmetic loop. Standard binary search inherently identifies the greatest valid lower bound. By caching the evaluated capacity $E_{\text{mid}}$ and its trailing binomial term exclusively when $E_{\text{mid}} < N$, the algorithm completes Phase 2 holding the exact base state mapping for $T = K \cdot (M^*-1)$, seamlessly eliminating the need for a secondary initialization loop.

\subsection{Phase 3: Incremental Linear Search}
The algorithm initializes the final sequential tracking at $T_{\text{cached}} = K \cdot (M^* - 1)$ using the cached capacity $E$ and boundary coefficient $B = \binom{T_{\text{cached}}}{K}$. To traverse the remaining local search gap, we advance sequentially utilizing Pascal's identity $\binom{T+1}{i} = \binom{T}{i} + \binom{T}{i-1}$. Summing this identity across the limits yields an arithmetic progression:
\begin{equation}
    E(T+1, K) = 2 \cdot E(T, K) - \binom{T}{K} + 1.
\end{equation}

To dynamically update the trailing boundary coefficient $B$, we algebraically refactor the binomial expansion:
\begin{equation} \label{eq:binomial_update}
    \binom{T+1}{K} = \binom{T}{K} + \frac{\binom{T}{K} \cdot K}{T+1-K}.
\end{equation}

Because this expression evaluates to exactly $\binom{T}{K-1}$, and the algorithm logic maintains $T \ge K$, the denominator safely avoids division by zero, guaranteeing an exact integer division. Furthermore, because the loop terminates when $E \ge N$, the intermediate numerator $B \times K$ remains bounded by $N \times K$.

\begin{algorithm}[H] 
\caption{Analytic Minimax Search}
\label{alg:egg}
\begin{algorithmic}[1]
\Require Test items $K \ge 1$, States $N \ge 1$
\Ensure Minimum worst-case tests $T^*$

\If{$N \le 1$ \textbf{or} $K = 1$} 
    \State \textbf{return} $N$ 
\EndIf

\Statex
\State \Comment{\textbf{Phase 1: Decision-Tree Lower Bound}}
\State $T_{\text{ideal}} \gets \lceil \log_2(N + 1) \rceil$
\If{$K \ge T_{\text{ideal}}$} 
    \State \textbf{return} $T_{\text{ideal}}$ 
\EndIf

\Statex
\State \Comment{\textbf{Phase 2: Discrete Root Bounding and Caching}}
\State $L \gets 1, \; R \gets 2^{\lceil T_{\text{ideal}} / K \rceil}$ \hfill \Comment{Evaluates natively via bitwise left-shift}
\State $E_{\text{cached}} \gets 0, \; B_{\text{cached}} \gets 1, \; T_{\text{cached}} \gets K$

\While{$L < R$} 
    \State $M_{\text{mid}} \gets \lfloor (L + R) / 2 \rfloor$
    \State $E_{\text{mid}} \gets 0, \; \text{term} \gets 1, \; T_{\text{mid}} \gets K \times M_{\text{mid}}$
    \For{$i = 1$ \textbf{to} $K$}
        \State $\text{term} \gets \text{term} \times (T_{\text{mid}} - i + 1) / i$ \hfill \Comment{Exact integer division}
        \State $E_{\text{mid}} \gets E_{\text{mid}} + \text{term}$
        \If{$E_{\text{mid}} \ge N$} 
            \State \textbf{break} 
        \EndIf
    \EndFor
    
    \If{$E_{\text{mid}} \ge N$} 
        \State $R \gets M_{\text{mid}}$ 
    \Else 
        \State $L \gets M_{\text{mid}} + 1$ 
        \State $E_{\text{cached}} \gets E_{\text{mid}}$ \hfill \Comment{Cache the greatest valid lower bound}
        \State $B_{\text{cached}} \gets \text{term}, \; T_{\text{cached}} \gets T_{\text{mid}}$ 
    \EndIf
\EndWhile

\Statex
\State \Comment{\textbf{Phase 3: Incremental Linear Search}}
\State $E \gets E_{\text{cached}}, \; B \gets B_{\text{cached}}, \; T \gets T_{\text{cached}}$
\While{$E < N$}
    \State $E \gets 2E - B + 1$
    \State $B \gets B + (B \times K) / (T + 1 - K)$ \hfill \Comment{Exact integer division}
    \State $T \gets T + 1$
\EndWhile

\State \textbf{return} $T$
\end{algorithmic}
\end{algorithm}

\section{Complexity Analysis}
\label{sec:complexity}

\begin{lemma} \label{lemma:gap}
The discrete root bounding in Phase 2 isolates the optimal threshold $T^*$ to an interval of size at most $K$, such that $T_{\text{cached}} < T^* \le T_{\text{cached}} + K$.
\end{lemma}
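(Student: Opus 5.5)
The plan is to reduce the lemma to the standard correctness invariant of a lower-bound binary search, applied to the predicate $P(M) :\Leftrightarrow E(K\cdot M, K) \ge N$ on $M \in [1, M_{\text{max}}]$. Let $M^*$ be the least $M$ in this range with $P(M)$. I will establish three facts: $P$ is monotone in $M$; $P(M_{\text{max}})$ holds; and $P(1)$ fails. From these, Phase 2 terminates with $L = R = M^*$ and $T_{\text{cached}} = K(M^*-1)$. The two-sided bound then follows because $T^*$ is the first $T$ at which $E(T,K)$ reaches $N$.

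\textbf{Step 1 (monotonicity).} I will show that $E(T,K)$ is strictly increasing in $T$. Summing Pascal's identity $\binom{T+1}{i} = \binom{T}{i} + \binom{T}{i-1}$ over $1\le i\le K$ gives
\[
E(T+1,K) - E(T,K) = \sum_{i=0}^{K-1}\binom{T}{i} \ge 1 .
\]
Hence $P(M)$ is monotone in $M$. Moreover, $E(T,K)<N$ for every $T < T^*$ and $E(T,K)\ge N$ for every $T \ge T^*$.

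\textbf{Step 2 (endpoints).} For the right endpoint, I invoke \eqref{eq:multiplier_bound}. Since $M_{\text{max}} = 2^{\lceil T_{\text{ideal}}/K\rceil}$, we get $E(K M_{\text{max}},K) \ge M_{\text{max}}^K \ge 2^{T_{\text{ideal}}} > N$, so $M^*$ exists in the search range. For the left endpoint, Phase 1 guarantees $K < T_{\text{ideal}}$, i.e.\ $K \le T_{\text{ideal}}-1$. Minimality of $T_{\text{ideal}}$ gives $2^{T_{\text{ideal}}-1} < N+1$. Therefore
\[
E(K,K) = 2^K - 1 \le 2^{T_{\text{ideal}}-1}-1 < N,
\]
so $P(1)$ fails and $M^* \ge 2$. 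One subtlety must be checked here: the inner loop may break early. It only does so when the partial sum already reaches $N$, so the decision $E_{\text{mid}} \ge N$ is always correct. Moreover, whenever the else-branch fires, the loop ran to completion, so the cached $E$ and $B$ are exactly $E(T_{\text{mid}},K)$ and $\binom{T_{\text{mid}}}{K}$.

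\textbf{Step 3 (binary search invariant and conclusion).} I maintain the invariant $L \le M^* \le R$. It holds initially, and by Step 1 each branch preserves it. Since $R-L$ strictly decreases, the loop ends with $L = M^*$.

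This step is the main obstacle: showing that the cache really holds $M^*-1$ and not the untouched initial value $T_{\text{cached}}=K$. Because $M^* \ge 2 > 1$, the variable $L$ must have been updated at least once. It is only ever updated in the else-branch, where $L \gets M_{\text{mid}}+1$. The last such update fixes $L$ at its final value $M^*$, so it occurred with $M_{\text{mid}} = M^*-1$ and simultaneously cached $T_{\text{cached}} = K(M^*-1)$.

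Finally, I apply Step 1 to both endpoints of the interval. Since $P(M^*-1)$ fails, $E(T_{\text{cached}},K) < N$ and hence $T^* > T_{\text{cached}}$. Since $P(M^*)$ holds, $E(T_{\text{cached}}+K,K)\ge N$ and hence $T^* \le T_{\text{cached}} + K$.
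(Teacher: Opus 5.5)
Your proposal is correct and follows the same route as the paper: strict monotonicity of $E(T,K)$ turns Phase 2 into a lower-bound binary search over multiples of $K$, and its final value $L = M^*$ gives $K(M^*-1) < T^* \le K M^*$. You also verify several points the paper's proof leaves implicit: that $M_{\text{max}}$ is a valid right endpoint, that Phase 1 forces $M^* \ge 2$ so the cache is actually overwritten with $T_{\text{cached}} = K(M^*-1)$ rather than left at its initial value $K$, and that the early break never corrupts either the branch decision or the cached state.
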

\begin{proof}
Phase 2 parameterizes the search sequence as $T = K \cdot M$. Because the capacity function $E(T, K)$ increases monotonically, standard binary search zeroes in on the optimal bounding integer $L$ such that $E(K \cdot L, K) \ge N$ and $E(K \cdot (L-1), K) < N$. 

By definition, $T^*$ is the minimum integer satisfying $E(T^*, K) \ge N$. Therefore, the optimal threshold is strictly bounded between these evaluated numbers: $K \cdot (L-1) < T^* \le K \cdot L$. The remaining gap evaluates to precisely $K \cdot L - K \cdot (L - 1) = K$. Initializing the sequential search at the cached state bounds the linear search phase to at most $K$ iterations.
\end{proof}

\begin{theorem}
Algorithm \ref{alg:egg} computes the exact target $T^*$ in an upper bound of $\mathcal{O}(\log N)$ time and $\mathcal{O}(1)$ space.
\end{theorem}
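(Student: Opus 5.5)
We argue correctness first and complexity second. Throughout we count each arithmetic operation on integers of size $\mathcal{O}(N K)$ as unit cost, as the paper implicitly does.

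\textbf{Correctness.} We split into the three exit points of Algorithm \ref{alg:egg}.

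The trivial cases $N \le 1$ or $K = 1$ follow directly from $E(T,1) = T$.

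For Phase 1, suppose $K \ge T_{\text{ideal}}$. The identity $E(T,K) = 2^T - 1$ for $K \ge T$ from Section \ref{sec:math} gives $E(T_{\text{ideal}},K) = 2^{T_{\text{ideal}}} - 1 \ge N$. For every $T < T_{\text{ideal}} \le K$ we also have $K \ge T$, so $E(T,K) = 2^T - 1 < N$ by minimality of the ceiling. Hence $T^* = T_{\text{ideal}}$.

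Otherwise $K < T_{\text{ideal}}$. Here we must verify three things.
\begin{enumerate}
\item The initial right endpoint $R = M_{\max}$ is feasible. By \eqref{eq:multiplier_bound}, $E(K M_{\max},K) \ge M_{\max}^K \ge 2^{T_{\text{ideal}}} > N$, so the binary search invariant ``$E(KR,K) \ge N$'' holds from the start. The invariant ``$E(K(L-1),K) < N$'' holds vacuously at $L = 1$, since $E(0,K) = 0$ and $N \ge 2$.
\item The early \texttt{break} in the inner loop does not corrupt the cached state. A break occurs only when $E_{\text{mid}} \ge N$, and caching happens only in the $E_{\text{mid}} < N$ branch, where the loop ran to completion. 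There $\text{term} = \binom{T_{\text{mid}}}{K}$ exactly; the running products $\binom{T}{i}$ are integers, so each integer division is exact.
\item At termination, $T_{\text{cached}} = K(L-1)$ with the correct pair $(E,B)$. One subtle case is $L = 1$, where the initialization $T_{\text{cached}} = K$, $E = 0$, $B = 1$ does not literally equal $E(K,K) = 2^K - 1$. I expect to handle this by noting that $E(K,K) < N$ always holds because $K < T_{\text{ideal}}$. Then either the first midpoint check overwrites the cache, or I must argue that the initial values still let Phase 3 find $T^*$. This is the step I would check most carefully, and it may require a patch (for instance, initializing $E_{\text{cached}} = 2^K - 1$).
\end{enumerate}
Given a correct cache, Phase 3 iterates the recurrences $E(T+1,K) = 2E(T,K) - \binom{T}{K} + 1$ and \eqref{eq:binomial_update}. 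Induction on $T$ shows that the invariants $E = E(T,K)$ and $B = \binom{T}{K}$ are preserved. The loop stops at the first $T$ with $E \ge N$, which is $T^*$ by monotonicity of $E(\cdot,K)$ and Lemma \ref{lemma:gap}.

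\textbf{Complexity.} Space is clearly $\mathcal{O}(1)$ words. Phase 3 costs $\mathcal{O}(K)$ by Lemma \ref{lemma:gap}, and $K < \log_2(N+1)$ gives $\mathcal{O}(\log N)$.

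The key estimate is Phase 2. The binary search runs over $[1, M_{\max}]$, so it takes $\lceil \log_2 M_{\max} \rceil = \lceil T_{\text{ideal}}/K \rceil$ iterations. Each iteration costs $\mathcal{O}(K)$. The product is
\[
K\lceil T_{\text{ideal}}/K\rceil \le T_{\text{ideal}} + K \le 2T_{\text{ideal}} = \mathcal{O}(\log N).
\]
This is exactly the cancellation of $K$ that the theorem relies on. Adding the $\mathcal{O}(1)$ cost of Phase 1 completes the bound.
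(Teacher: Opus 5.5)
Your complexity argument is the paper's: $\lceil T_{\text{ideal}}/K\rceil$ binary-search probes at $\mathcal{O}(K)$ each, at most $K$ Phase~3 steps by Lemma~\ref{lemma:gap}, and $K<\log_2(N+1)$ to absorb everything. Your estimate $K\lceil T_{\text{ideal}}/K\rceil\le T_{\text{ideal}}+K\le 2T_{\text{ideal}}$ is tidier than the paper's $\mathcal{O}(\log N/K)\times\mathcal{O}(K)$. Your correctness half goes further than the paper, which simply assumes the cache holds the state at $T=K(L-1)$.

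The genuine hole is the step you left open in item~3, and correctness does depend on it. If Phase~3 ever started from the initial triple $(E,B,T)=(0,1,K)$, the update would give $E=0$, then $E=-K$, and $E$ would keep decreasing, so the loop would never terminate. Your fallback ``the initial values still let Phase~3 find $T^*$'' is therefore false and should be dropped. The patch $E_{\text{cached}}=2^K-1$ is also unnecessary.

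The closure uses exactly the fact you wrote down. From $K<T_{\text{ideal}}$ you get $E(K,K)=2^K-1<N$. The binary search terminates with $E(KL,K)\ge N$, so the final $L\ge 2$. Since $L$ changes only in the else-branch, via $L\gets M_{\text{mid}}+1$, that branch executes at least once. Its \emph{last} execution sets $L$ to its final value, so it has $M_{\text{mid}}=L-1$. Afterwards only $R$ moves, so the cache is never overwritten again. It is this last else-probe, not the first midpoint, that fixes the cache. On that branch $E_{\text{mid}}<N$, so the inner loop ran all $K$ steps without breaking. Hence the cache is exactly $\bigl(E(K(L-1),K),\binom{K(L-1)}{K},K(L-1)\bigr)$ with $K(L-1)\ge K$.

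Two minor fixes:
\begin{enumerate}
\item The case $N=1$, $K\ge 2$ follows from $E(0,K)=0<1=E(1,K)$, not from $E(T,1)=T$.
\item In your cost model the intermediates are polynomial in $N$, not $\mathcal{O}(NK)$. In Phase~2 the product $\text{term}\times(T_{\text{mid}}-i+1)$ is only bounded by $N\,T_{\text{mid}}<NKM_{\max}=\mathcal{O}(N^{3/2}\log N)$, and it can exceed $NK$ (e.g.\ $K=2$, $N=2^{2m}+1$, first probe). That is still $\mathcal{O}(1)$ words, so your bound stands.
\end{enumerate}
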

\begin{proof}
Phase 1 operates in $\mathcal{O}(1)$ time. In Phase 2, the algorithm searches the mathematically restricted domain $M \in [1, M_{\text{max}}]$. This bounded interval acts as a self-balancing algorithmic envelope. The binary search isolates the multiplier $M^*$ in exactly $\log_2(M_{\text{max}}) = \lceil T_{\text{ideal}} / K \rceil$ geometric splits. Because Phase 1 defines $T_{\text{ideal}} \approx \log_2 N$, the search depth strictly evaluates to an asymptotic upper bound of $\mathcal{O}(\frac{\log N}{K})$. 

Within each search split, the algorithm evaluates $E(K \cdot M_{\text{mid}}, K)$ via an internal $\mathcal{O}(K)$ loop. Consequently, the variable $K$ algebraically cancels out between the geometric search depth and the arithmetic step cost:
\begin{equation}
    \text{Phase 2 Time} = \mathcal{O}\left(\frac{\log N}{K}\right) \times \mathcal{O}(K) = \mathcal{O}(\log N)
\end{equation}
This locks the worst-case binary search computation to $\mathcal{O}(\log N)$, circumventing the standard $\mathcal{O}(K \log N)$ bottleneck. Furthermore, configuring the upper bound natively requires only a single $\mathcal{O}(1)$ bitwise left-shift.

By Lemma \ref{lemma:gap}, the linear search in Phase 3 is restricted to a maximum of $K$ discrete tracking increments. Each sequential step updates using $\mathcal{O}(1)$ arithmetic bounds. Because Phase 1 structurally enforces $K < \log_2(N+1)$, the cumulative sum of the phases, $\mathcal{O}(\log N) + \mathcal{O}(K)$, condenses to a final bound of $\mathcal{O}(\log N)$.
\end{proof}

\section{Algebraic Confinement of the Optimal Threshold}
\label{sec:confinement}

We prove that the exact integer threshold $T^*$ is locked within a tight geometric radius. This result exposes a fundamental inefficiency of unconstrained searches and hints at an algorithmic alternative: the possibility of bypassing combinatorial search entirely in favor of numerical techniques.

\begin{lemma} \label{lemma:algebraic_bound}
The exact optimal test threshold $T^*$ is bounded within a radius of $K$ from $x = (N \cdot K!)^{\frac{1}{K}}$, satisfying:
\begin{equation}
    x - K < T^* < x + K.
\end{equation}
\end{lemma}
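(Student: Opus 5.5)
The plan is to prove the two inequalities separately by sandwiching $E(T,K)$ between two explicit polynomial expressions in $T$ of degree $K$, each divided by $K!$. Both are then compared with $N = x^K/K!$. Since $E(T,K)$ is monotone in $T$ and $T^*$ is the least $T$ with $E(T,K)\ge N$, it suffices to exhibit one $T< x+K$ with $E(T,K)\ge N$, and to show $E(T,K)<N$ for every integer $T\le x-K$.

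\emph{Upper bound.} As in Phase 2, I would use $E(T,K)\ge\binom{T}{K}$. This time I write $\binom{T}{K}=\frac{1}{K!}\prod_{j=0}^{K-1}(T-j)$ and bound each factor below by its smallest one:
\begin{equation*}
    \binom{T}{K}\ \ge\ \frac{(T-K+1)^K}{K!}\qquad (T\ge K-1).
\end{equation*}
Take $T_0=\lceil x\rceil+K-1$. Then $T_0-K+1=\lceil x\rceil\ge x>0$, so the condition $T_0\ge K-1$ holds, and $\binom{T_0}{K}\ge x^K/K! = N$. Hence $T^*\le T_0=\lceil x\rceil+K-1<x+1+K-1=x+K$, which gives the strict upper inequality.

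\emph{Lower bound.} I need an upper estimate on the full partial sum, not just its top term. The key step is a Vandermonde comparison:
\begin{equation*}
    \binom{T+K}{K}=\sum_{i=0}^{K}\binom{T}{i}\binom{K}{K-i}\ \ge\ \sum_{i=0}^{K}\binom{T}{i}=1+E(T,K),
\end{equation*}
since every $\binom{K}{K-i}\ge 1$. It follows that $E(T,K)<\binom{T+K}{K}=\frac{1}{K!}\prod_{j=1}^{K}(T+j)\le \frac{(T+K)^K}{K!}$. Now let $T$ be an integer with $0\le T\le x-K$. Then $(T+K)^K\le x^K$, so $E(T,K)<N$, and no such $T$ is feasible. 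Therefore $T^*>x-K$. If $x-K<0$ this inequality is trivial, because $T^*\ge 0$.

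The step I expect to need the most care is the lower bound. The naive estimate $E(T,K)\le \sum_{i\le K} T^i/i!$ does not compare cleanly with $x^K/K!$. The Vandermonde inequality is what absorbs all the lower-order terms $\binom{T}{i}$, $i<K$, into a single product $\prod_{j=1}^{K}(T+j)$, whose largest factor is $T+K$. That is exactly the shift by $K$ in the statement.

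Finally, I would check two edge conventions: $\binom{T}{i}=0$ for $i>T$, which keeps Vandermonde valid for small $T$, and strictness at both ends. Strictness on the left comes from dropping the $i=0$ term. Strictness on the right comes from $\lceil x\rceil<x+1$.
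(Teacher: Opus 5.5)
Your proof is correct and follows the paper's argument. You get $T^*>x-K$ from the Vandermonde comparison $E(T,K)<\binom{T+K}{K}\le (T+K)^K/K!$, and $T^*<x+K$ from the falling-factorial bound $\binom{X}{K}\ge (X-K+1)^K/K!$. The only difference is cosmetic: you apply the falling-factorial bound to the explicit feasible witness $\lceil x\rceil+K-1$ rather than to the infeasible value $T^*-1$, which removes the paper's standing assumption $K<T^*$.
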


\begin{proof}
Assume the constrained operational regime where items do not exceed the required tests ($K < T^*$). The maximum capacity for exactly $K$ drops evaluates to $E(K, K) = 2^K - 1$. Since $K$ drops are insufficient to cover $N$, the optimal target $T^*$ must be strictly greater than $K$, ensuring $T^* - 1 \ge K$.

By definition of the optimal threshold, $E(T^*-1, K) < N$. Because $T^* - 1 \ge K$, we can apply the standard falling factorial lower bound $\binom{X}{K} \ge \frac{(X-K+1)^K}{K!}$ to the trailing binomial term, establishing:
\begin{align}
    N &> E(T^*-1, K) \nonumber \\
      &\ge \binom{T^*-1}{K} \nonumber \\
      &\ge \frac{(T^* - 1 - K + 1)^K}{K!} = \frac{(T^* - K)^K}{K!}. \label{eq:continuous_lower}
\end{align}
Isolating $T^*$ from Inequality \eqref{eq:continuous_lower} yields the strict lower bound:
\begin{equation}
    (N \cdot K!)^{\frac{1}{K}} > T^* - K.
\end{equation}

Conversely, by definition, $N \le E(T^*, K)$. Utilizing the Vandermonde identity for combinatorial convolution, we can bound the capacity strictly from above:
\begin{align}
    \binom{T^* + K}{K} &= \sum_{j=0}^K \binom{T^*}{j} \binom{K}{K-j} \nonumber \\
    &\ge 1 + \sum_{j=1}^K \binom{T^*}{j} \nonumber \\
    &> E(T^*, K) \ge N.
\end{align}
Because all binomial coefficients are naturally bounded by $\binom{X}{K} \le \frac{X^K}{K!}$, substitution yields:
\begin{equation}
    N < \binom{T^* + K}{K} \le \frac{(T^*+K)^K}{K!} \implies (N \cdot K!)^{\frac{1}{K}} < T^* + K.
\end{equation}

Letting $x = (N \cdot K!)^{\frac{1}{K}}$, the union of these inequalities rigorously confirms that the exact integer solution is locked within $x - K < T^* < x + K$.
\end{proof}

Lemma \ref{lemma:algebraic_bound} reveals a critical structural property of the generalized egg dropping problem: the seemingly massive sequential search space bounded by $N$ is practically an illusion. The true search radius is dictated almost entirely by an offset bounded by $K$. 

This confinement forms the basis of an alternative algorithmic approach. By bypassing the binary search entirely to compute the integer floor $r = \lfloor (N \cdot K!)^{\frac{1}{K}} \rfloor$, one can initialize at the lower edge $T_0 = \max\{K, r-K\}$, compute the initial pair $E(T_0,K)$ and $\binom{T_0}{K}$ using a single $\mathcal{O}(K)$ binomial pass, and then apply the same incremental update identities from Phase 3 until $E(T,K) \ge N$. Since Lemma \ref{lemma:algebraic_bound} guarantees $T^* < x+K \le r+K+1$, the scan is confined to a window of at most $2K+1$ candidate values. This allows us to determine the optimal threshold in $\mathcal{O}(K + \mathcal{R}(N,K))$ time, where $\mathcal{R}(N,K)$ is the time complexity of computing the exact integer $K$-th root of $N \cdot K!$.

\section{Retracing the Optimal Policy}
\label{sec:policy}

Computing the minimax budget $T^*$ establishes the necessary operational bounds, but the decision-maker must also generate the sequential testing policy. By utilizing the combinatorial bounds, the optimal decision tree can be reconstructed dynamically without spatial transition matrices.

Let $F_{\text{safe}}$ denote the highest floor known to be safe, and let $F_{\text{break}}$ denote the lowest floor proven to break the item. Initially, $F_{\text{safe}} = 0$ and $F_{\text{break}} = N+1$. Suppose it is at an operational state with $k$ items and $t$ optimal tests remaining. 

The minimax optimization dictates that if a queried threshold breaks the item, the target must be isolated within the untested floors strictly below the query utilizing $k-1$ items and $t-1$ tests. This sub-domain capacity is bounded by $E(t-1, k-1)$. Therefore, relative to the highest known safe floor $F_{\text{safe}}$, the optimal threshold to query next is:
\begin{equation}
    f_{\text{test}} = \min\Big(F_{\text{break}} - 1, \; F_{\text{safe}} + E(t-1, k-1) + 1 \Big).
\end{equation}

Rather than calculating this capacity dynamically using $\mathcal{O}(K)$ summations at each branch, we track the terminal states backward in $\mathcal{O}(1)$ time per step. Algorithm \ref{alg:egg} concludes leaving the final state capacity $E = E(t,k)$ and boundary coefficient $B = \binom{t}{k}$ cached in memory. First, partition the boundary coefficient into its breaking and surviving branches:
\begin{equation}
    B_{\text{break}} = \binom{t-1}{k-1} = B \frac{k}{t}, \quad B_{\text{stay}} = \binom{t-1}{k} = B - B_{\text{break}}.
\end{equation}
Because combinatorial coefficients are integers, evaluating $B \times k/t$ guarantees an exact integer division.

Substituting the forward recurrence $E(t, k) = 2E(t-1, k) - B_{\text{stay}} + 1$, we explicitly isolate the structural capacities of the surviving and breaking sub-domains:
\begin{equation}
    E_{\text{stay}} = E(t-1, k) = \frac{E + B_{\text{stay}} - 1}{2},
\end{equation}
\begin{equation}
    E_{\text{break}} = E(t-1, k-1) = E - E_{\text{stay}} - 1.
\end{equation}
Because combinatorial floor limits adhere to strict binomial parity, $(E + B_{\text{stay}} - 1)$ inherently forms an even integer, ensuring the halving step guarantees an exact whole number.

Upon executing the test at floor $f_{\text{test}}$, the policy branches based on the adversarial outcome, updating the tracking variables dynamically in constant time:
\begin{itemize}
    \item \textbf{Item breaks:} The upper bound collapses ($F_{\text{break}} \gets f_{\text{test}}$). The algebraic state tracks the breaking sub-domain ($E \gets E_{\text{break}}$, $B \gets B_{\text{break}}$). Resources decrement to $t-1$ tests and $k-1$ items.
    \item \textbf{Item survives:} The lower bound rises ($F_{\text{safe}} \gets f_{\text{test}}$). The algebraic state tracks the surviving sub-domain ($E \gets E_{\text{stay}}$, $B \gets B_{\text{stay}}$). Resources decrement to $t-1$ tests, and $k$ remains unchanged.
\end{itemize}

Should the remaining active interval fall beneath unconstrained limits ($k \ge \lceil \log_2(F_{\text{break}} - F_{\text{safe}}) \rceil$), the policy defaults to standard binary search midpoints. By replacing evaluation loops with basic arithmetic, test execution resolves in exactly $\mathcal{O}(T^*)$ time per run. Furthermore, mapping the full execution tree across all $N$ potential thresholds processes in optimal $\mathcal{O}(N)$ global time without invoking array caches.

\section{Computational Results}
\label{sec:experiments}

To empirically validate the theoretical time complexities, we evaluated the proposed algorithm against the state-of-the-art (SOTA) binomial dynamic programming approach utilizing binary search \cite{wills2009connections}. We also include the analytic algorithm suggested by Lemma \ref{lemma:algebraic_bound}, which computes the integer root $\lfloor (N \cdot K!)^{1/K} \rfloor$ and scans the confined radius using the same incremental recurrence. All algorithms were implemented in Python, leveraging its native arbitrary-precision integer arithmetic to safely handle massive state spaces without overflow artifacts. 

To mitigate operating system jitter and interpreter initialization overhead at the microsecond scale, execution times were measured using a rigorous benchmarking framework. Functions were warmed up to initialize memory caches, and execution times were averaged across 10,000 iterations per state. Table \ref{tab:experiments} summarizes the results across a spectrum of problem scales, ranging from trivial base cases to extreme combinatorial bounds ($N = 10^{300}$).

\begin{table}[ht]
\centering
\caption{Empirical Execution Times: SOTA vs. Proposed Algorithms}
\label{tab:experiments}
\footnotesize
\setlength{\tabcolsep}{3.5pt}
\renewcommand{\arraystretch}{1.15}
\begin{tabular}{@{}lcrrrr@{}}
\toprule
\textbf{$N$} & \textbf{$K$} & \textbf{$T^*$} & 
\textbf{SOTA ($\mu$s)} & 
\textbf{$\log N$ Algorithm ($\mu$s)} & 
\textbf{Analytic Method ($\mu$s)} \\ 
\midrule
$1$ & $1$ & $1$ & $0.03$ & $0.04$ & $0.17$ \\
$100$ & $2$ & $14$ & $1.61$ & $1.15$ & $1.60$ \\
$1000$ & $15$ & $10$ & $4.50$ & $0.11$ & $0.27$ \\
$10^4$ & $3$ & $40$ & $3.67$ & $1.83$ & $2.88$ \\
$10^9$ & $5$ & $166$ & $11.77$ & $3.12$ & $4.73$ \\
$10^{18}$ & $10$ & $290$ & $37.27$ & $8.12$ & $11.53$ \\
$10^{30}$ & $20$ & $272$ & $69.94$ & $11.63$ & $22.57$ \\
$10^{100}$ & $50$ & $1972$ & $355.64$ & $44.93$ & $52.59$ \\
$10^{300}$ & $40$ & $498644626$ & $1896.42$ & $184.12$ & $74.37$ \\ 
\bottomrule
\end{tabular}
\end{table}
The empirical results strongly corroborate the asymptotic analysis established in Section \ref{sec:complexity}. At the base case ($N=1, K=1$), the SOTA and $\mathcal{O}(\log N)$ algorithms perform almost identically, while the analytic method has slightly higher constant overhead in trivial cases. The analytic method has slightly higher constant overhead in trivial cases.

However, the advantages of the proposed algorithm's core search logic become immediately apparent as we scale beyond trivial tree depths. Consider the parameter set $N=10^{18}, K=10$. Because the item inventory is strictly less than the information-theoretic lower bound ($10 < \lceil \log_2(10^{18}+1) \rceil$), the $\mathcal{O}(1)$ early-exit condition is bypassed, forcing a complete combinatorial evaluation. Here, the state-of-the-art approach requires $37.27\ \mu\text{s}$ to resolve the problem. In contrast, the proposed algorithm computes the exact threshold in just $8.12\ \mu\text{s}$, while the analytic method resolves the same instance in $11.53\ \mu\text{s}$. This validates the efficiency of isolating the search envelope to multiples of $K$ (Phase 2), traversing the remaining gap incrementally (Phase 3), and alternatively exploiting the confinement radius from Lemma \ref{lemma:algebraic_bound}.

As both the operational threshold $N$ and the item inventory $K$ continue to scale into massive magnitudes, the asymptotic divergence becomes pronounced. For large problem instances (e.g., $N=10^{100}$), the SOTA execution time degrades due to the repeated $\mathcal{O}(K)$ inner evaluation loops required at each node of its binary search tree.

Conversely, the proposed algorithm's execution times remain light. This confirms the strict $\mathcal{O}(\log N)$ bound, where the arithmetic influence of $K$ is minimal during the discrete root bounding phase. At the extreme limit of $N=10^{300}$, the multiplier-search approach achieves an order of magnitude speedup over the prior algorithm, while the analytic implementation is even faster in this benchmark because the confinement lemma reduces the final verification to a narrow local scan after exact root extraction.

\section{Open Problems and Theoretical Limits}
\label{sec:open_problems}

While our $\mathcal{O}(\log N)$ algorithm resolves the standard computational bottlenecks, Lemma \ref{lemma:algebraic_bound} reveals an even deeper property: the optimal threshold $T^*$ is constrained by a very tight mathematical radius, almost entirely independent of the massive overall search space $N$. This insight prompts an open question about the theoretical limits of the problem.

Under the standard word RAM model, is it theoretically possible to compute the exact minimax testing threshold $T^*$ for the generalized egg dropping problem in $\mathcal{O}(K)$ time?

The conceptual pathway to this bound is illuminated by the confinement radius. By proving that $T^*$ is strictly locked within the interval $x - K < T^* < x + K$, where $x = (N \cdot K!)^{\frac{1}{K}}$, the necessity of searching the global domain is eliminated. If the integer floor of this $K$-th root can be evaluated efficiently, an algorithm could simply traverse this localized interval utilizing the constant-time arithmetic progressions outlined in Phase 3 of Algorithm \ref{alg:egg}, yielding a strict $\mathcal{O}(K)$ global time complexity.

Determining whether this is achievable theoretically requires resolving this algorithmic constraint: precise root extraction. One must guarantee an integer floor for the $K$-th root of an astronomically large integer $N$ without introducing hidden bitwise complexity factors under the standard word RAM assumption.

\section{Conclusion}
\label{sec:conclusion}

We advance the state-of-the-art for the generalized egg dropping problem by introducing a novel algorithm that isolates the minimax test threshold in $\mathcal{O}(\log N)$ time. Furthermore we proved that the optimal solution is confined to a tight geometric radius and suggested an alternative analytic approach. Finally, we provide an explicit $\mathcal{O}(1)$ space framework to dynamically map the optimal testing policy on the fly.

\section*{Data and Code Availability}

The Python implementation and benchmarking framework used to generate the computational results in Section \ref{sec:experiments} are available upon request, and are designed to be fully reproducible under standard execution environments.


\end{document}